\documentclass[12pt]{article}

\usepackage{amstext,amsmath,amssymb,amsfonts,bbm}
\usepackage[latin1]{inputenc}
\usepackage{epsfig}
\usepackage{dsfont}
\usepackage{hyperref}
\usepackage{amsthm}
\usepackage{subfigure}
\usepackage{color}
\usepackage{multirow}
\usepackage{psfrag}
\usepackage{graphicx}
\usepackage{extarrows}
\usepackage{braket}

\usepackage[lmargin=40pt,rmargin=40pt,tmargin=80pt,bmargin=80pt]{geometry}

\theoremstyle{plain}

\newtheorem{theorem}{Theorem}

\newtheorem{proposition}{Proposition}
\theoremstyle{definition}

\allowdisplaybreaks[4]

\bibliographystyle{unsrt}

\newcommand{\cC}{{\cal C}}
\newcommand{\cM}{{\cal M}}
\newcommand{\cG}{{\cal G}}

\begin{document}

\title{The complete $1/N$ expansion of a SYK--like tensor model}

\author{Razvan Gurau\footnote{rgurau@cpht.polytechnique.fr, 
 Centre de Physique Th\'eorique, \'Ecole Polytechnique, CNRS, Universit\'e Paris-Saclay, F-91128 Palaiseau, France
 and Perimeter Institute for Theoretical Physics, 31 Caroline St. N, N2L 2Y5, Waterloo, ON, Canada.}}

\maketitle

\abstract{A SYK--like model close to the colored tensor models has recently been proposed \cite{Witten:2016iux}. Building on results obtained in tensor models
\cite{GurSch}, we discuss the complete $1/N$ expansion of the model. We detail the two and four point functions at leading order.  
The leading order two point function is a sum over melonic graphs, and the leading order relevant four point functions are sums over dressed ladder 
diagrams. We then show that any order in the $1/N$ series of the two point function can be written solely in term of the leading order two and four point functions.
The full $1/N$ expansion of arbitrary correlations can be obtained by similar methods.}

\section{Introduction}

A tensor model version \cite{Witten:2016iux} of the SYK model \cite{Sachdev:1992fk,Kitaev} (see \cite{Polchinski:2016xgd,Jevicki:2016bwu,Maldacena:2016hyu} for recent developments) has been recently proposed by Witten.
The model is constructed starting from $D+1$ real fermionic fields $\psi^0,\dots \psi^D$, each with $n^D$ real components (yielding a total of $N=(D+1)n^D$ independent real fermionic fields).
In the tensor model literature \cite{review} the upper index $c\in \{0,\dots, D\} =  \cC $ is called the \emph{color} of the field $\psi^c$. 
For every $c$, the field $\psi^c$ lives in a vector representation of $O(n)^D$ \emph{i.e.} it is a tensor with 
$D$ indices. We denote these indices ${\bf a}^c = \left( a^{cc_1} | c_1 \in \cC \setminus \{ c \} \right)$ and each $a^{c_1c_2}$ takes values $1$ to $n$.
In full index notation, the action of the SYK--like tensor model proposed in \cite{Witten:2016iux} in Euclidean time is:
\begin{align*}
 S = \int d \tau \left[ \frac{1}{2 } \sum_c \left( \sum_{ {\bf a}^c } \psi^c_{{\bf a}^c} \frac{d}{d\tau} \psi^c_{{\bf a}^c} \right) -
  \imath^{(D+1)/2} \frac{J}{n^{\frac{D(D-1)}{4}}}  \sum_{ {\bf a}^0 , \dots, {\bf a}^D  } \psi^0_{{\bf a}^0} \dots \psi^D_{{\bf a}^D} \prod_{c_1<c_2} \delta_{a^{c_1c_2} a^{c_2c_1}}
 \right] \;. 
\end{align*}
This model is very similar to the colored tensor model (CTM) \cite{review,color}, except that:
\begin{itemize}
 \item[-] The SYK--like model of \cite{Witten:2016iux} is fermionic. With the exception of the first paper on the topic, \cite{color}, the CTM is usually formulated for bosonic fields. 
 The statistics of the field has no bearing on the structure of the $1/n$ expansion \emph{i.e.} the class of graphs contributing at a given order in $1/n$.
 \item[-] The SYK--like model of \cite{Witten:2016iux} has a non trivial propagator ($\imath/p$ in momentum space). While this propagator encodes the entire physical interpretation of the model, it again has no bearing on the $1/n$ expansion.
 \item[-] Finally, the SYK--like model of \cite{Witten:2016iux} is formulated in terms of real fields, while the CTM is usually formulated in terms of complex fields. This slightly changes the $1/n$ expansion and we will briefly
 allude to this at the end of this paper.
\end{itemize}

Given the similarity of the two models, it is straightforward to translate the full $1/n$ expansion of the CTM \cite{GurSch} to the SYK--like model of \cite{Witten:2016iux}.
For convenience we will be considering in the rest of this paper the complex version of the model: 
\begin{align*}
 S = \int d \tau \bigg[ \sum_c \left( \sum_{ {\bf a}^c } \bar \psi^c_{{\bf a}^c} \frac{d}{d\tau} \psi^c_{{\bf a}^c} \right) & -
   \frac{J}{n^{\frac{D(D-1)}{4}}}  \sum_{ {\bf a}^0 , \dots, {\bf a}^D  } \psi^0_{{\bf a}^0} \dots \psi^D_{{\bf a}^D} \prod_{c_1<c_2} \delta_{a^{c_1c_2} a^{c_2c_1}} \crcr
 & - \frac{J}{n^{\frac{D(D-1)}{4}}}  \sum_{ {\bf a}^0 , \dots, {\bf a}^D  }  \bar \psi^D_{{\bf a}^D} \dots \bar \psi^0_{{\bf a}^0} \prod_{c_1<c_2} \delta_{a^{c_1c_2} a^{c_2c_1}}
 \bigg] \;,
\end{align*}
and we only discuss the case with at least four colors, $D\ge 3$.

\section{Discussion}

Before proceeding to the technical part of the paper, let us comment in more detail on the results we obtain and their interpretation.
In \cite{Witten:2016iux} Witten observed that the leading order two point function of a SYK--like tensor model obeys a melonic Schwinger Dyson equation:
\[
  \frac{1}{ G_{\rm LO} (\omega) } = - \imath \omega - \Sigma(\omega) \;, \qquad \Sigma(\tau,\tau') = J^2 [ G_{\rm LO}(\tau,\tau')]^D \;,
\]
identical to the one obtained in the usual SYK model  \cite{Polchinski:2016xgd,Jevicki:2016bwu,Maldacena:2016hyu}. The main point is that, contrary to the 
usual SYK models, this equation is obtained in tensor SYK models without resorting to a quench disorder average.

Seeing that the two point function in the SYK--like tensor model has this behavior, two questions arise naturally:
\begin{itemize}
 \item what can be said about the four point function(s) in the SYK--like tensor model? It is known that in the usual SYK model 
 \cite{Polchinski:2016xgd,Jevicki:2016bwu,Maldacena:2016hyu} the relevant contribution to the four point function 
 is given by the sum over four point ladder diagrams. Does something similar hold for the SYK--like tensor model? 
 \item what can be said about the $1/n$ series of the SYK--like tensor model? More precisely, how do 
 the results on the full $1/n$ expansion of tensor models in \cite{GurSch} translate for the SYK--like tensor model? 
\end{itemize}

The answers to these questions are more subtle than it might appear at first sight. 

\paragraph{Four point functions.}

It is not a priory obvious that if melons dominate the two point function then ladders will dominate 
the four point function. In fact this is not true. The ladder diagrams play a distinguished role in the SYK--like tensor model, but for 
a more subtle reason. 

Four point functions per se have not been studied so far in the tensor model literature. While straightforward, their analysis is performed 
for the first time here.  

Due to the fact that the fields are tensors, one needs to proceed with care.
As a function of the pattern of identification of the external indices, one distinguishes several four point 
functions (see Section~\ref{ssec:fourpoint}): \emph{exceptional}, \emph{broken}, \emph{unbroken} and \emph{small}. Each of these four point functions will have 
a certain leading order scaling in $1/n$. It is however naive to just
compare the scalings and decide that some four point functions are subleading with respect to the others (by this 
argument one would conclude that the exceptional four point functions 
dominate over all the others). This is incorrect because each type of four point function corresponds to
a different patterns of identification of the external indices. The correct way to analyze the scaling with $n$ of the 
four point functions is to contract the external indices either among themselves or on new vertex kernels and consider 
the scaling with $n$ of this contraction. From this analysis we conclude that the small four point functions are suppressed in powers of $1/n$.

In Section~\ref{ssec:fourpointlead} we show that the ladder (or chain) diagrams yield the leading order contributions to
the broken and unbroken four point functions. In particular, the ``out of time order'' four point correlation which is crucial for the 
usual SYK models \cite{Maldacena:2016hyu} corresponds precisely to broken and unbroken four point functions in the tensor case.
The exceptional four point functions are not subleading with respect to the broken and unbroken ones, but, due to the pattern of identification of their 
external indices, do not contribute to the ``out of time order'' correlation.

\paragraph{The full $1/n$ series and non perturbative effects.} 
In Section~\ref{sec:twopointallorders} we discuss the full $1/n$ series of the tensor SYK model. The results we preset 
classify the non perturbative effects one encounters at any order in $1/n$.

In the usual SYK models (as well as in the tensor SYK models) non perturbative effects play a crucial role. The chains (ladder diagrams) are a sufficiently simple family of
graphs and can be resummed analytically \cite{Sachdev:1992fk,Kitaev,Maldacena:2016hyu}. They are a  geometric series in an elementary four 
point kernel $K$ (using the notation of \cite{Maldacena:2016hyu}) leading to resummed four point function $(1-K)^{-1}$. The non triviality of the four point function (ultimately leading to a saturation of the chaos bound 
\cite{Maldacena:2016hyu,Maldacena:2015waa})
comes from the fact that the elementary kernel $K$ has eigenvalue $1$. 

However, non perturbative effects in usual quantum field theory and matrix models arise from infinite families of graphs which are much more complicated
than chains: the \emph{parquet graphs}. Although parquet graphs arise in any quantum field theory, for simplicity let consider the case of a quartic interaction. 
The lowest order radiative correction to the 
coupling constant is given by the four point ``bubble graph'' consisting in two vertices connected by two edges 
(hence with four external half edges). For any graph of the model, one can build an infinite family of parquet
graphs by inserting iteratively in all possible manners the bubble graph at any vertex. 
Parquet graphs play a crucial role in QCD \cite{'tHooft:1973jz}  for instance:
in the planar limit of QCD one would in principle need to resum the family of planar parquet 
graphs and it is only after this resummation that 
one can access non perturbative effects like quark confinement \cite{'tHooft:1973jz}.

The problem with parquet graphs is that they are quite difficult to handle: to this day, there are no viable methods 
to resum them analytically. When studying subleading orders in the $1/n$ expansion of the SYK and tensor SYK models, one 
runs the risk of having to resum such infinite families of parquet graphs: if this were to happen, the analytic computation would be 
impossible.

So, does one need to resum the full family of parquet graphs at some order in $1/n$ in the tensor SYK model? 
The main result of this paper is that the answer to this question is \emph{no}.
Theorem~\ref{thm:mainmic} in Section~\ref{sec:twopointallorders} states that any given order in the $1/n$ series in the tensor SYK model is a finite sum of convolutions of a finite number of 
leading order two and four point (broken and unbroken) kernels. 

As it happens, these are exactly the kernels one needs to compute in order 
to estimate the ``out of time order correlation'' and test the chaos bound \cite{Maldacena:2016hyu}.
It should be stressed that, while it is natural for the leading order two point function to appear, that there is no a priori reason why only the leading order 
chain diagrams should contribute to any order in $1/n$. This is very different from what happens in matrix models for instance. Furthermore, there is no a priori reason
why the number of convolutions at any order in $1/n$ should be finite. Again, this does not happen in matrix models.
Proving these two statements is quite involved \cite{GurSch}. However, the fact that both these statements are true 
strongly restricts the non perturbative effects at a fixed order in $1/n$:
the only non perturbative effects one encounters at any fixed order in $1/n$ come from the resummation of the chain 
diagrams as geometric series in the elementary four point kernel $K$.
While Theorem~\ref{thm:mainmic} is a direct consequence of the parallel result obtained in tensor models \cite{GurSch}, it has
drastic consequences for the non perturbative effects in SYK--like tensor models.

\section{Two and four point functions}

We will discuss in detail the connected two and four point functions of the model. General $2p$ point functions can be obtained by similar methods.

\subsection{The two point function}

From symmetry arguments it follows that the connected two point functions of the model are non zero only if the fields have the same color and the same indices:
\[ \Braket{ \psi^c_{ {\bf a}^c } (\tau) \bar \psi^{c'}_{ {\bf b}^{c'} } (\tau')  }_{\rm c} = \frac{  \int [d\bar \psi d\psi] \;    \psi^c_{ {\bf a}^c }(\tau) \bar \psi^c_{ {\bf b}^{c'} } (\tau') e^{-S} }
  {  \int [d\bar \psi d\psi] \;   e^{-S} }   = G(\tau,\tau') \left( \delta^{cc'}\prod_{c_1\neq c} \delta_{a^{cc_1} b^{c'c_1}} \right) \;,
\]
where the \emph{normalized two point function} $G(\tau,\tau')$:
\[
  G(\tau,\tau') =    \frac{1}{n^D}  \; \frac{ \int [d\bar \psi d\psi] \;  \left[ \sum_{ {\bf a}^c }  \psi^c_{ {\bf a}^c } (\tau) \bar  \psi^c_{ {\bf a}^c } (\tau') \right] e^{-S} }  {  \int [d\bar \psi d\psi] \;   e^{-S} } 
   = \sum_{\cG}  {\bf A}^{\cG}(\tau,\tau') \;,
\]
is a sum of connected graphs $\cG$ (see Fig.~\ref{fig:examples}) with the following properties:
\begin{itemize}
 \item[-] the graphs are bipartite (which we track by coloring the vertices black and white) and every edge has a color $c\in \cC$.
 \item[-] every vertex is incident to exactly one edge for each color.
 \item[-] one edge (representing the insertions $ \sum_{ {\bf a}^c }  \psi^c_{ {\bf a}^c } (\tau) \bar  \psi^c_{ {\bf a}^c } (\tau')$ in the functional integral) is 
 marked\footnote{The two point functions with fixed external arguments are sums over graphs with two external half edges. The sums over the indices of the insertions 
 in $G(\tau, \tau')$ represent the gluing of the two half edges into an edge.}. We call this edge the \emph{root} of $\cG$. 
\end{itemize}
In the rest of this paper we call such graphs \emph{closed} (as they don't have external half edges) for short. Opening the root edge of a closed graph into two half edges leads to a
two point graph.

 \begin{figure}[htb]
 \begin{center}
 \includegraphics[width=6cm]{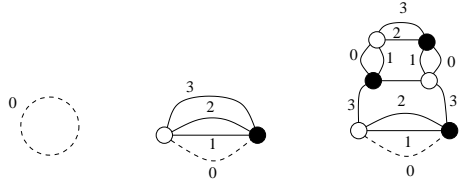}  
 \caption{Closed, bipartite, edge $3+1$ colored, rooted graphs. The root is represented as dashed.} \label{fig:examples}
 \end{center}
 \end{figure}
 
 From the onset we include among the closed graphs the \emph{ring} graph represented by and edge closing onto itself and having no vertex (Fig.~\ref{fig:examples} on the left). It corresponds to the bare propagator 
 contribution to the two point function.

As remarked in \cite{Witten:2016iux}, the amplitude of $\cG$ factors into a pure scaling with $n$ and an integral over the internal positions of the vertices:
\begin{equation}\label{eq:ampli}
 {\bf A}^{\cG}(\tau,\tau') = n^{-\frac{2}{(D-1)!} \omega(\cG) } A^{\cG}(\tau,\tau') \;,
\end{equation}
where $\omega(\cG)$ is called the \emph{degree} of $\cG$ \cite{review}. This formula is obtained \cite{review} by noting that:
\begin{itemize}
 \item every vertex contributes $n^{-\frac{D(D-1)}{4}}$. We denote $2k(\cG)$ the number of vertices of $\cG$ (which is even as $\cG$ is bipartite).
 \item every sum over an index $a^{c_1c_2}$ brings a factor $n$. An index $a^{c_1c_2}$ is identified along the edges of colors $c_1$ and $c_2$ of the graph, and 
 one free sum is obtained for every closed cycle with colors $c_1c_2$. These closed cycles are called the \emph{faces} with colors $c_1c_2$ of $\cG$ (and we 
 denote $F^{c_1c_2}(\cG)$ their number).
\end{itemize}
Taking into account the overall factor $1/n^D$, the scaling with $n$ of a graph is first obtained as: 
\[n^{ - D -\frac{D(D-1)}{2} k(\cG) + \sum_{c_1<c_2} F^{c_1c_2} (\cG)} \;.
\]

A cyclic permutation $\pi$ over the colors induces an embedding of $\cG$ into a two dimensional surface of minimal genus $g_{\pi}(\cG)$ by ordering the edges clockwise (resp. counterclockwise) around the 
white (resp. black) vertices in the order $0,\pi(0), \pi^2(0) \dots \pi^D(0)$. The two cells of the embedding (\emph{i.e.} the faces of the embedding) are exactly the faces with colors 
$\pi^q(0)\pi^{q+1}(0)$ for $q=0,\dots D$. For every embedding the Euler relation reads $ 2-2g_{\pi} (\cG) = 2k(\cG) - (D+1) k(\cG)  + \sum_{q=0}^D F^{\pi^q(0)\pi^{q+1}(0)} (\cG) $.
Taking into account that there are $D!$ cycles $\pi$, and that the faces with colors $c_1c_2$ are counted by $2(D-1)!$ embeddings, summing over $\pi$ we obtain:
\begin{align}\label{eq:facesdeg}
 & 2 D ! -2 \sum_{\pi} g_{\pi}(\cG) = D! (D-1)k(\cG) + 2 (D-1)! \sum_{c_1<c_2} F^{c_1c_2}(\cG) \Rightarrow \crcr
 & \qquad  \sum_{c_1<c_2} F^{c_1c_2} (\cG ) = D + \frac{D(D-1)}{2} k (\cG) - \frac{2}{(D-1)!} \left( \frac{1}{2} \sum_{\pi} g_{\pi}(\cG) \right) \;,
\end{align}
and Eq.~\eqref{eq:ampli} follows with $ \omega(\cG) = \frac{1}{2} \sum_{\pi} g_{\pi}(\cG)  $ (which is an integer as the embeddings $\pi$ and $\pi^{-1}$ are identified by mirror symmetry).
In this form it is self evident that $\omega(\cG)$ is a non negative integer and indexes the $1/n$ series of the SYK--like model of \cite{Witten:2016iux}.

The $3+1$ colored graphs in Fig.~\ref{fig:examples} have degrees, from left to right: $0$, $0$ and $1$ (the ring graph with $D+1$ color has $D$ faces, one for each color different from the color of its unique edge).

\subsection{The four point functions}\label{ssec:fourpoint}

The four point functions with fixed external arguments are sums over connected graphs with 4 external half edges. As a function of $D$, several classes of four point functions 
are generated. Simple combinatorial arguments and conservation of the external 
indices along bi colored paths of edges lead to the following classification of the four point functions:
\begin{itemize}
     \item For $D=3$ two exceptional four point functions (represented in Fig.~\ref{fig:exceptional}): 
              \begin{align*}
               & \Braket{ \psi^0_{{ \bf a}^0 } (\tau_0) \psi^1_{{ \bf a}^1 }(\tau_1)\psi^2_{{ \bf a}^2 }(\tau_2)\psi^3_{{ \bf a}^3 } (\tau_3)}_{\rm c} 
                    = F_{\rm exceptional} (\tau_0,\tau_1,\tau_2,\tau_3) \prod_{c_1<c_2} \delta_{a^{c_1c_2} a^{c_2c_1}} \;, \crcr 
               & \Braket{ \bar \psi^0_{{ \bf a}^0 } (\tau_0) \bar \psi^1_{{ \bf a}^1 }(\tau_1)  \bar \psi^2_{{ \bf a}^2 }(\tau_2) \bar \psi^3_{{ \bf a}^3 } (\tau_3)}_{\rm c} = 
               \bar F_{\rm exceptional} (\tau_0,\tau_1,\tau_2,\tau_3) \prod_{c_1<c_2} \delta_{a^{c_1c_2} a^{c_2c_1}} \; . 
               \end{align*}
  \begin{figure}[htb]
 \begin{center}
 \includegraphics[height=4cm]{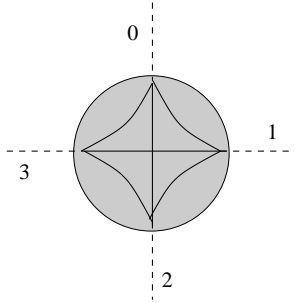}  
 \caption{Exceptional four point function in $D=3$. The solid lines indicate the identifications of external indices.} \label{fig:exceptional}
 \end{center}
 \end{figure}
        The exceptional four point functions have four different external colors.
     \item For any $D \ge 3$, \emph{broken} and \emph{unbroken} for point functions  (represented in Fig.~\ref{fig:brokenunbrokesmall}), and for $D\ge 4$ and only one external color also \emph{small} four point functions:
        \begin{align*}
         & \Braket{ \psi^0_{{ \bf a}^0 } (\tau_0) \bar \psi^0_{{ \bf b}^0 }(\tau_1) \psi^1_{{ \bf p}^1 }(\tau_2) \bar \psi^1_{{ \bf q}^1 } (\tau_3)}_{\rm c} = \crcr 
         & \;  =     F^{ \neq}_{\rm B}(\tau_1,\tau_2,\tau_3,\tau_4) 
              \left( \prod_{c\neq 0} \delta_{a^{0c} b^{0c} }\right) \left( \prod_{c'\neq 1} \delta_{p^{1c'} q^{1c'} }\right)   + \crcr   
        & \; \qquad  +  F^{\neq  }_{\rm U}(\tau_0,\tau_1,\tau_2,\tau_3)  \; \delta_{a^{01} p^{10}}  \delta_{b^{01} q^{10}} 
             \left( \prod_{c\neq 0,1}\delta_{a^{0c} b^{0c}} \right)           \left( \prod_{c'\neq 0,1}\delta_{p^{1c'} q^{1c'}} \right)         
           \crcr
          & \Braket{ \psi^0_{{ \bf a}^0 } (\tau_0) \bar \psi^0_{{ \bf b}^0 }(\tau_1) \psi^0_{{ \bf p}^0 }(\tau_2) \bar \psi^0_{{ \bf q}^0 } (\tau_3)}_{\rm c} =  \crcr
          & \; =  F^{ = }_{\rm B}(\tau_1,\tau_2,\tau_3,\tau_4)  \left[ \left( \prod_{c\neq 0} \delta_{a^{0c} b^{0c} }\right) \left( \prod_{c'\neq 0} \delta_{p^{0c'} q^{0c'} }\right)  
               -  (b\leftrightarrow q )  \right]   + \crcr                 
         & \; \qquad +  F^{= }_{\rm U}(\tau_0,\tau_1,\tau_2,\tau_3)  \; \left[ \sum_c \delta_{a^{0c} p^{0c}}  \delta_{b^{0c} q^{0c}} 
             \left( \prod_{c '\neq c}\delta_{a^{0c'} b^{0c'}} \right)           \left( \prod_{c'\neq c}\delta_{p^{0c'} q^{0c'}} \right)   -    (b\leftrightarrow q )    \right]   \crcr 
             & \; \qquad +  F^{= }_{\rm small}(\tau_0,\tau_1,\tau_2,\tau_3)  \; \left[ \sum_{ \cC' \subset \cC}^{ 2\le |\cC'| \le [D/2] } 
                \left(  \prod_{c \in \cC' } \delta_{a^{0c} p^{0c}}  \delta_{b^{0c} q^{0c}} \right)
             \left( \prod_{c \notin \cC' } \delta_{a^{0c} b^{0c}}  \delta_{p^{0c} q^{0c}} \right)   -    (b\leftrightarrow q )    \right] \;,
        \end{align*}
        where the upper index $=$ or $\neq$ indicates that the external half edges either all have the same color, or are paired into two pairs with different colors.
        The small four point functions appear only in $D\ge 4$ and only if the four external colors are all equal.
  \begin{figure}[htb]
 \begin{center}
 \includegraphics[width=12cm]{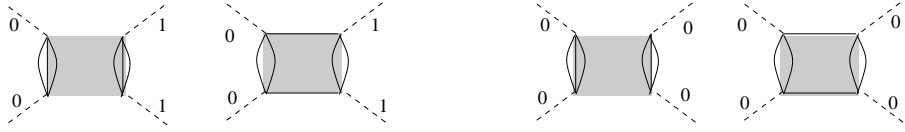}  
 \caption{Broken and unbroken four point functions with different (resp. equal) external colors for $D=3$.} \label{fig:brokenunbrokesmall}
 \end{center}
 \end{figure}
\end{itemize}

By symmetry under relabeling of the colors, analyzing the exceptional four point functions, broken and unbroken four point functions with different external colors
and broken, unbroken and small four point functions with equal external colors suffices.
We will see below that the most relevant four point functions are the broken and unbroken ones with either equal or different external colors.

\section{The leading order in $1/n$}\label{sec:leading1/n}

Two classes of graphs play an important role below: the melonic graphs and the chains. Two equivalent definitions of melonic graphs can be found in the literature \cite{GurSch,critical},
and we present and use them both.

\paragraph{First definition of two point melonic graphs \cite{critical}.}

The first example of a two point melonic graph, the \emph{trivial two point graph} consists in an edge and no vertex (see Fig.~\ref{fig:example1}, on the left). 
The second example is the \emph{fundamental melon} (see Fig.~\ref{fig:example1}, in the middle), that is the two point graph 
with exactly two vertices. Due to the coloring constraints the vertices are connected by $D$ edges, and the two external half 
edges have the same color. 
 \begin{figure}[htb]
 \begin{center}
 \includegraphics[width=10cm]{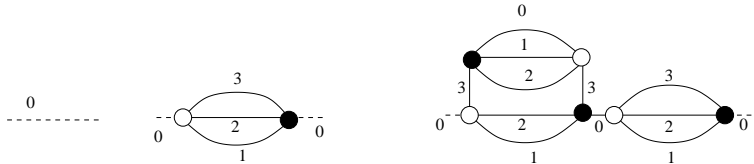}  
 \caption{First definition of two point melonic graphs.} \label{fig:example1}
 \end{center}
 \end{figure}

All the other two point melonic graphs are obtained by inserting the fundamental melon arbitrarily on the edges (including the right external half edge) of two point melonic graphs 
(an example is presented Fig.~\ref{fig:example1}, on the right).

\paragraph{Second definition of two point melonic graphs \cite{GurSch}.}
Consider a two point graph $\cG$, \emph{i.e.} a graph with two external half edges (of the same color). We define the (possibly empty) \emph{root cut set} of $\cG$ as the 
set of all the one particle reducibility edges in $\cG$ (\emph{i.e.} the maximal set of edges $f\in \cG$ such that $\cG$ splits into two connected components
by cutting $f$). All the edges in the root cut set have the color of the external half edges of $\cG$.
Then $\cG$ is a:
\begin{itemize}
 \item \emph{two point melonic graph} if all the connect components obtained by cutting the edges in the root cut set are \emph{two point prime melonic graphs} (or it is the trivial two point graph).
 \item \emph{two point prime melonic graph} if by deleting the two vertices hooked to the external half edges and cutting all the edges incident to them, the graph splits in 
 exactly $D$ (possibly trivial) \emph{two point melonic graphs}.
\end{itemize}

This definition is illustrated in Fig.~\ref{fig:meldef} below.
 \begin{figure}[htb]
 \begin{center}
 \includegraphics[width=6cm]{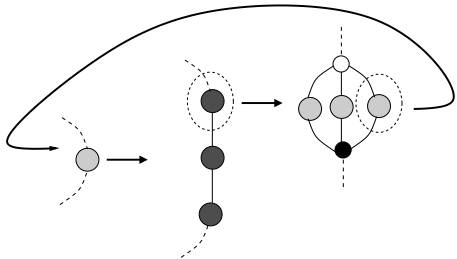}  
 \caption{Second definition of two point melonic graphs.} \label{fig:meldef}
 \end{center}
 \end{figure}

 It is a simple exercise \cite{GurSch} to show that the two definitions are equivalent.
 
 \paragraph{Closed melonic graphs.} A graph obtained by connecting the two external half edges of a two point melonic graph into a root edge
is called a \emph{closed melonic graph}. Observe that the ring graph is obtained by closing the trivial two point graph, and by convention is considered closed melonic.

\paragraph{Four point chain graphs.} A $(D-1)$-dipole with external colors $c_1$ and $c_2$  (Fig.~\ref{fig:example2} on the left) is a four point graph with exactly two vertices. 
Due to the coloring constraints the two vertices are connected by $D-1$ edges. A $(D-1)$-dipole has two pairs of external half edges, the right half edges with color $c_1$ and the left
half edges with color $c_2$. Dipoles can join together to form chains (Fig.~\ref{fig:example2} in the middle and on the right). This is done by connecting the 
left half edges of a dipole to the right half edges of another dipole respecting the coloring.

 \begin{figure}[htb]
 \begin{center}
 \includegraphics[width=10cm]{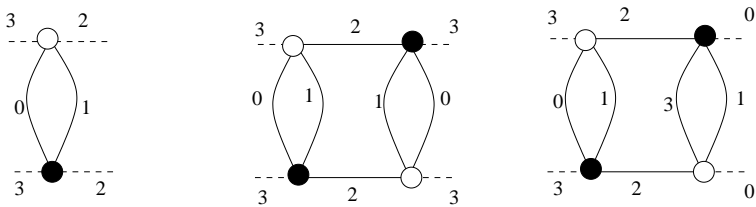}  
 \caption{A dipole, an unbroken and a broken chain.} \label{fig:example2}
 \end{center}
 \end{figure}

 Chains are either \emph{unbroken} (Fig.~\ref{fig:example2} in the middle), if two faces are transmitted along the chain from the left to the right half edges (one along the top and one along the bottom
 of the chain), or \emph{broken} if no face is transmitted (Fig.~\ref{fig:example2} on the right). Furthermore the chains can have either equal or different external colors.

\subsection{Two point function at leading order}

The degree of a closed graph does not change when inserting a two point melonic graph on an edge. This can readily be seen from Eq.~\eqref{eq:facesdeg}, as the insertion of a fundamental melon 
 on an edge brings $2$ new vertices and $\frac{D(D-1)}{2}$ new faces (recall that $k(\cG)$ is the half number of vertices of $\cG$).
 In particular all the closed melonic graphs have degree $0$, as the ring graph (which is closed and melonic) has $D$ faces (the edge has some color $c$ and we get a face for all the colors $\cC \setminus \{c\}$) 
 and no vertex, hence degree $0$. The converse statement is also true.
\begin{proposition}\cite{critical}\label{prop:melons}
 For $D\ge 3$ a closed graph has degree zero if and only if it is a closed melonic graph. 
\end{proposition}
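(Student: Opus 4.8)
The plan is to prove both directions of the equivalence, the nontrivial one being that degree zero forces melonic structure. The "only if" implication will be handled by induction on the number of vertices; the "if" direction is already established in the discussion preceding the proposition (insertion of fundamental melons preserves degree, and the ring graph has degree zero), so I would only recall it briefly.

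First I would set up the bound that drives everything. Fix a closed graph $\cG$ with $2k(\cG)$ vertices and rewrite Eq.~\eqref{eq:facesdeg} as $\frac{2}{(D-1)!}\omega(\cG) = D + \frac{D(D-1)}{2}k(\cG) - \sum_{c_1<c_2}F^{c_1c_2}(\cG)$, so that $\omega(\cG)=0$ is equivalent to $\sum_{c_1<c_2}F^{c_1c_2}(\cG) = D + \frac{D(D-1)}{2}k(\cG)$, i.e.\ the number of faces is maximal. The key combinatorial input is a \emph{jacket} or bound argument showing $\sum_{c_1<c_2}F^{c_1c_2}(\cG)\le D+\frac{D(D-1)}{2}k(\cG)$ always holds, with a characterization of when equality is attained. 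One clean route: pick any edge $e$ of color $c$ in $\cG$ and consider the effect of contracting it (or deleting a $(D)$-dipole). I would argue that if $\cG$ is not the ring graph, then $\cG$ contains a \emph{fundamental melon} as a subgraph --- a pair of vertices joined by $D$ parallel edges --- precisely when the degree is zero. This is the heart of the matter: show that a degree-zero closed graph with $k(\cG)\ge 1$ must contain two vertices connected by $D$ edges.

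To find the fundamental melon I would count faces locally. Take any vertex $v$ (white, say) and its $D+1$ incident edges of colors $0,\dots,D$. For each pair of colors $c_1,c_2$ the face through $v$ with colors $c_1c_2$ starts $v \xrightarrow{c_1} w \xrightarrow{c_2}\cdots$; there are $\binom{D+1}{2}$ such faces at $v$, and if two of the edges at $v$, say those of colors $c_1,c_2$, both go to the same neighbor $w$, and moreover \emph{all} the faces $c_1 c_j$ at $v$ have length two, one is forced toward $v,w$ being joined by many parallel edges. The quantitative statement is: summing the "face deficiency" $D+\frac{D(D-1)}{2}k - \sum F^{c_1c_2}$ over an appropriate recursive decomposition, each non-melonic local configuration contributes a strictly positive amount, so degree zero forbids all of them. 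I would make this precise by the standard tensor-model move: if $\cG$ has degree zero and $k\ge 1$, delete a fundamental melon to get $\cG'$ with $k(\cG')=k(\cG)-1$; since deletion removes $2$ vertices and exactly $\frac{D(D-1)}{2}$ faces, $\omega(\cG')=\omega(\cG)=0$, and $\cG'$ is again closed. By induction $\cG'$ is closed melonic, hence (by the first definition, inserting the melon back) so is $\cG$. The base case $k=0$ is the ring graph, which is closed melonic by convention.

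The main obstacle is the existence-of-a-melon step: proving that \emph{every} degree-zero closed graph with at least one vertex actually contains a fundamental melon. The subtlety is that a priori a degree-zero graph could "spread" its faces in a non-local way; one has to rule this out. I would handle it by contradiction --- assume no two vertices are joined by $D$ parallel edges --- and then exhibit, via the embedding/jacket sum $\frac12\sum_\pi g_\pi(\cG)$, a jacket (ribbon graph from a single cyclic order $\pi$) of strictly positive genus, contradicting $\omega(\cG)=0$ (which forces every jacket to be planar). Concretely, absence of a melon lets one find a cycle in some jacket that is non-contractible, because the "short" faces that would make it planar are exactly the ones a fundamental melon provides. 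This jacket argument, rather than a direct face count, is the cleanest way to close the gap, and it is where essentially all the work lies; the rest is bookkeeping with Eq.~\eqref{eq:facesdeg} and the inductive melon-deletion. For full rigor I would cite the detailed version of this argument from \cite{critical}, since the proposition is attributed there, and present here only the structure: maximality of faces $\iff$ all jackets planar $\iff$ existence of a fundamental melon at each stage $\iff$ iterated melonic insertions starting from the ring.
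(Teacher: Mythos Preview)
Your inductive skeleton (find a fundamental melon, delete it, recurse down to the ring graph) is exactly the structure of the argument in \cite{critical} that the paper invokes. Where you diverge from the paper is in the \emph{existence step}: how to show that a degree-zero closed graph with $k\ge 1$ must contain a pair of vertices joined by $D$ parallel edges.

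The paper does not attack this by your proposed contradiction (``no melon $\Rightarrow$ some jacket has positive genus via a non-contractible cycle''). Instead it stratifies the faces by length: letting $F_s(\cG)$ be the number of faces with $2s$ vertices, one combines $\sum_s F_s = D + \tfrac{D(D-1)}{2}k - \tfrac{2}{(D-1)!}\omega$ with the incidence count $\sum_s 2s\,F_s = D(D+1)k$ to eliminate $k$ and obtain
\[
(D+1)\,\tfrac{2}{(D-1)!}\,\omega(\cG) + 2F_1(\cG) \;=\; D(D+1) + \sum_{s\ge 2}\bigl[s(D-1)-D-1\bigr]F_s(\cG)\,,
\]
which is Eq.~\eqref{eq:truc}. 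For $D\ge 3$ every bracket on the right is $\ge 0$, so $\omega=0$ forces $F_1\ge \tfrac{D(D+1)}{2}>0$: a degree-zero graph always has a face of length two. This short face, together with the fact that all $D!$ jackets are planar, is what drives the ``short induction'' the paper alludes to (one upgrades a length-two face to a full $D$-dipole using planarity of the remaining jackets, then removes it). Your local-count paragraph was actually heading in this direction before you abandoned it for the jacket-contradiction route.

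Your proposed substitute---``absence of a melon lets one find a non-contractible cycle in some jacket''---is the step that is not justified. It is not clear how to produce such a cycle without first knowing something like $F_1>0$; a priori a melon-free graph could still have all jackets planar unless you have a counting obstruction. Equation~\eqref{eq:truc} \emph{is} that obstruction, and it is the ingredient your write-up is missing. I would replace the jacket-by-contradiction paragraph with the derivation of Eq.~\eqref{eq:truc}, state the consequence $F_1>0$ for $\omega=0$, and then run your induction; that matches the paper and closes the gap.
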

The proof of this statement \cite{critical} is somewhat convoluted.
The first remark is that, if one denotes $F_s(\cG)$ the total number of faces with $2s$ vertices of $\cG$, one has on the one hand:
\[ \sum_{s\ge 1} F_s(\cG) = D + \frac{D(D-1)}{2}k(\cG) - \frac{2}{(D-1)!} \omega(\cG)  \;, \]
and on the other $\sum_{s} (2s) F_s(\cG) = D(D+1)k(\cG)$, as every edge belongs to exactly $D$ faces and $\cG$ has $(D+1)k(\cG)$ edges. 
Eliminating $k(\cG)$ one obtains:
\begin{align}\label{eq:truc}
 (D+1) \frac{2}{(D-1)!} \omega(\cG) + 2F_1(\cG) = D(D+1) + \sum_{s\ge 2} [s(D-1) - D -1 ] F_s(\cG) \;,
\end{align}
and $s(D-1) - D -1 \ge D-3 $ for $s\ge 2$, i.e. all the coefficients of the sum in the right hand side are non negative. 

It follows that for $D\ge 3$ (in contrast to $D=2$) any graph with degree zero has faces with exactly two vertices.
As, furthermore, all the $D!$ embeddings of a graph with degree $0$ are planar, a short induction \cite{critical} proves Proposition \ref{prop:melons}.

The normalized two point function scales like $G(\tau,\tau')  \sim n^0$ and at leading order equals:
 \[G_{\rm LO}(\tau,\tau') = \sum_{ \cM \in \genfrac{}{}{0pt}{}{\text{two point melonic graphs}}{\text{with external color $0$} } } A^{\cM}(\tau,\tau') \; , \]
where we used the fact that rooted closed melonic graphs and two point melonic graph are in bijection: while in order to determine the scaling with $n$ it is more convenient 
to use the closed graphs, once the melonic family is singled out we go back to a more familiar sum over two point melonic graphs. 

Before concluding, let us note that Eq.~\eqref{eq:truc} tells us that not only a graph with degree zero has faces with exactly two vertices, 
but also that, for any graph, the number of large faces (with more than two vertices)\footnote{A technicality is that for $D=3$, the coefficient of 
$F_2(\cG)$ is exactly zero, hence the number of faces with four vertices is not bounded by this equation. A detailed analysis \cite{GurSch} shows that the case $D=3$ indeed behaves like the case $D\ge 4$.}
and the maximal number of vertices of a face
are bounded linearly in the degree and the number $F_1(\cG)$ of short faces. This will prove crucial later on.

The two point melonic graphs are identical with the leading order graphs for the two point function in the standard SYK model \cite{Polchinski:2016xgd,Jevicki:2016bwu,Maldacena:2016hyu}.
 
\subsection{Four point functions at leading order}\label{ssec:fourpointlead}

We have several cases.

\begin{description}
 \item[Exceptional four point functions.] They arise only in $D=3$ and have external colors $0,1,2$ and $3$. Let us contract an exceptional four point function with a vertex kernel. For any graph contributing to this four point function 
 we obtain a closed connected graph. As closed graphs scale like $n^{3-\omega(\cG)}$, we can derive the maximal scaling of exceptional four point functions (recall that a vertex brings a factor $n^{-\frac{3\cdot 2}{4}}$ and a face a 
 factor $n$):
 \begin{align*}
  F_{\rm exceptional} (\tau_0,\tau_1,\tau_2,\tau_3) n^{- \frac{3\cdot 2}{ 4} + 6} = n^3 \Rightarrow F_{\rm exceptional} (\tau_0,\tau_1,\tau_2,\tau_3) \sim n^{-\frac{3}{2}} \;.
 \end{align*}
  \begin{figure}[htb]
 \begin{center}
 \includegraphics[height=3cm]{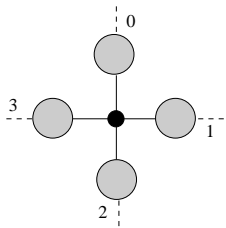}  
 \caption{Leading order exceptional four point functions in $D=3$.} \label{fig:LOexceptional}
 \end{center}
 \end{figure}
 
   Observe that a bare vertex decorated by leading order two point functions on its half edges (see Fig.~\ref{fig:LOexceptional}) becomes melonic by contracting it with a vertex kernel, hence reproduces 
   the scaling $n^{-\frac{3}{2}}$. The converse is also true.
  \begin{proposition}
   A four point graph contributes to the leading order of an exceptional for point function if and only if it is a bare vertex decorated by melonic two point graphs on its half edges.
  \end{proposition}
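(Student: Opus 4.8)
The plan is to reduce the statement to a counting identity for the degree of the closed graph obtained by contracting the four point graph with a vertex kernel, and then to invoke Proposition~\ref{prop:melons}. Let $\cG$ be a four point graph contributing to an exceptional four point function, so $\cG$ has four external half edges of colors $0,1,2,3$ glued (via the pattern $\prod_{c_1<c_2}\delta_{a^{c_1c_2}a^{c_2c_1}}$) onto a single extra vertex kernel $v$; call the resulting closed graph $\widehat{\cG}$. First I would record that $\widehat{\cG}$ has $2k(\cG)+1$ vertices if $\cG$ has $2k(\cG)$ vertices plus the one from $v$ --- but since $\widehat{\cG}$ must be bipartite this is only consistent when the external half edges of $\cG$ attach to $v$ in the colorings forced by the $\delta$ pattern; I would make this bookkeeping explicit and compute the scaling exactly as in the displayed computation preceding the Proposition, getting $F_{\rm exceptional}\sim n^{3-\omega(\widehat{\cG})}\,n^{-3}=n^{-\omega(\widehat{\cG})-3/2}$ wait --- more precisely I would show the leading order is achieved precisely when $\omega(\widehat{\cG})=0$, i.e. when $\widehat{\cG}$ is a \emph{closed melonic graph}.

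The heart of the argument is then: $\widehat{\cG}$ closed melonic $\iff$ $\cG$ is a bare vertex decorated by two point melonic graphs on its four half edges. The $(\Leftarrow)$ direction is the remark already made in the text (decorating a bare vertex with LO two point functions and closing with $v$ gives a melonic graph, since melonic insertions do not change the degree and the ``double vertex'' $\widehat{\cG}$ with no other structure is melonic). For $(\Rightarrow)$ I would use the second (recursive) definition of closed/two point melonic graphs from the excerpt: delete from $\widehat{\cG}$ the vertex $v$ and the vertex opposite to it along the root — here the role of the ``root'' is played by one of the edges joining $\cG$ to $v$ — and apply the prime-melonic decomposition. Because $v$ is incident to one edge of each color $0,1,2,3$ going into $\cG$, removing $v$ must split $\widehat{\cG}$ into exactly $D$ two point melonic graphs hanging off the $D$ half edges; in $D=3$ this is three melonic two point graphs plus the structure that reconstitutes the fourth half edge, which forces $\cG$ to be exactly a single bare vertex carrying a melonic two point graph on each of its four legs. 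I would spell out the induction that no internal vertex of $\cG$ other than the "central" one can survive once we demand $\omega(\widehat{\cG})=0$.

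The main obstacle, I expect, is the combinatorial case analysis in the $(\Rightarrow)$ direction: one must be careful that the vertex kernel $v$ attached for the scaling computation is genuinely generic (it should not itself create or destroy faces in a way that changes $\omega$ beyond the bookkeeping), and that the recursive melonic decomposition, which is phrased for two point graphs with a distinguished pair of external half edges, is correctly adapted to a graph $\widehat{\cG}$ whose distinguished structure is the extra vertex $v$ rather than a marked edge. A clean way around this is to first open one of the four $\cG$--$v$ edges back into two half edges, obtaining a two point graph $\cG'$ of external color (say) $0$ with $\omega(\cG')=\omega(\widehat{\cG})$, then apply Proposition~\ref{prop:melons} / the recursive melonic definition to $\cG'$ directly and translate back. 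I would also need to double-check the $D=3$-specific footnote subtlety (the coefficient of $F_2$ vanishing) does not obstruct the conclusion, but since Proposition~\ref{prop:melons} is stated uniformly for $D\ge 3$ this is already handled upstream.
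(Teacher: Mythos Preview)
Your overall strategy --- contract $\cG$ with a vertex kernel $v$ to form a closed graph $\widehat{\cG}$, observe that leading order is equivalent to $\omega(\widehat{\cG})=0$, then invoke Proposition~\ref{prop:melons} and analyse the melonic structure of $\widehat{\cG}$ --- is exactly the paper's. The difference is in how the $(\Rightarrow)$ direction is executed. You try to run the \emph{second} (recursive prime/cut-set) definition of melonic graphs on $\widehat{\cG}$ after opening one $v$--$\cG$ edge; this can be made to work, but your exposition wobbles (the aborted scaling computation, the $D$ vs.\ $D{+}1$ count, and the vague ``vertex opposite to $v$ along the root''). The paper instead uses the \emph{first} definition and the one-line observation that in any closed melonic graph every vertex has a canonical partner (the other vertex of the fundamental melon that created it), so that deleting $v$ together with its partner splits $\widehat{\cG}$ into $D{+}1$ two point melonic graphs; the partner is then forced to be the bare vertex of $\cG$ and the $D{+}1$ pieces are precisely the melonic decorations on its half edges. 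Your route buys nothing extra and costs you the bookkeeping of the cut-set decomposition; the paper's route avoids all of that by reading the structure off directly from the inductive construction of melons.
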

  \begin{proof}
  A graph with four external half edges contributes to the leading order exceptional four point function if and only if, when adding a vertex and connecting
  the half edges respecting the coloring, the resulting closed graph is melonic.
  
  Any vertex in a closed melonic graphs has a canonical partner such that when deleting the two vertices the melonic graph splits into $D+1$ melonic two point graphs. 
  This is obvious when considering the first definition of melonic graphs: a canonical pair is formed by 
  the two vertices introduced by the insertion of a fundamental melon. It follows that the leading order contribution to the 
  exceptional four point functions is the bare vertex decorated by a leading order two point function on each of its half edges.   
  \end{proof}
  
\item[Broken and unbroken four point functions.] They arise for all $D$ and have either equal or different external colors (say $0,0,0,0$ or $0,0,1,1$). 
Let us first consider the case with different external colors ($0,0,1,1$). We contract the indices of the color $0$ (resp. color $1$) fields.
We obtain a closed colored graph which scales at most like $n^D$, and it scales like $n^D$ only if it is melonic.
It follows that:
\begin{align*}
   & F^{ \neq}_{\rm B}(\tau_1,\tau_2,\tau_3,\tau_4) n^{2D} \sim  F^{\neq  }_{\rm U}(\tau_0,\tau_1,\tau_2,\tau_3) n^{ 2D-1 } \sim n^D  \Rightarrow \crcr
   & \qquad F^{ \neq}_{\rm B}(\tau_1,\tau_2,\tau_3,\tau_4) \sim n^{-D} \; ,\qquad  F^{\neq  }_{\rm U}(\tau_0,\tau_1,\tau_2,\tau_3) \sim n^{-D+1} \;.
\end{align*}

The broken and unbroken chains decorated by arbitrary melonic two point graphs (see Fig.~\ref{fig:LObrokenunbroken}) reproduce these scalings as they become melonic when reconnecting the left and right half edges into edges.
  \begin{figure}[htb]
 \begin{center}
 \includegraphics[height=3cm]{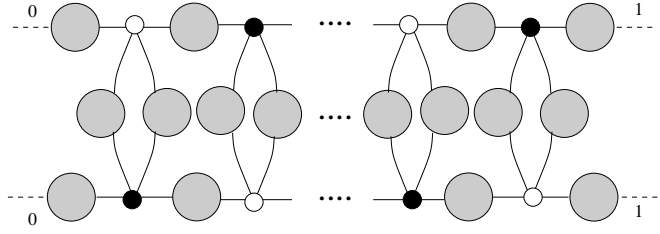}  
 \caption{Leading order broken (or unbroken) four point functions.} \label{fig:LObrokenunbroken}
 \end{center}
 \end{figure}
The converse statement is also true.
\begin{proposition}\label{prop:chains}
A graph contributes to the leading order broken (resp. unbroken) four point functions if and only if it is a broken (resp. unbroken) chain decorated by melonic two point graphs on its edges.
\end{proposition}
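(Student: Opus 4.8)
The plan is to establish both implications through the operation of \emph{closing} a four point graph — reconnecting its left pair and its right pair of external half edges into two edges — together with Proposition~\ref{prop:melons}. For the ``if'' direction, take a broken (resp. unbroken) chain whose internal and external edges are decorated by melonic two point graphs, and close it. A bare chain closes to a necklace of $(D-1)$-dipoles of degree zero — this is immediate from Eq.~\eqref{eq:facesdeg}, or follows by iterated removal of fundamental melons — and decorating edges by melonic two point graphs does not change the degree, so the closed graph is melonic. By Eq.~\eqref{eq:ampli} its amplitude scales as $n^{D}$; matching this against the power counting written just before the statement of the proposition, the broken (resp. unbroken) four point amplitude scales as $n^{-D}$ (resp. $n^{-D+1}$), i.e. at leading order.

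For the ``only if'' direction, let $\cG$ be a connected four point graph with external colours $0,0,1,1$ contributing at leading order to, say, the broken four point function; the equal--colour case $0,0,0,0$ is handled in the same way, with two marked edges of the same colour, and using that the same power counting makes the \emph{small} four point functions genuinely subleading, so that they never enter at leading order. Contracting the colour $0$ external half edges into an edge $e_0$ and the colour $1$ ones into $e_1$ yields a connected closed graph $\bar\cG$ whose amplitude equals $n^{2D}$ times the broken coefficient of $\cG$. Leading order forces this amplitude to be of order $n^{D}$, hence $\omega(\bar\cG)=0$, so $\bar\cG$ is closed melonic by Proposition~\ref{prop:melons}, and $\cG=\bar\cG\setminus\{e_0,e_1\}$.

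Now cut $e_0$ first. A closed melonic graph is melonic with respect to any choice of root \cite{GurSch,critical}, so $M:=\bar\cG\setminus e_0$ is a two point melonic graph with external colour $0$ containing the colour $1$ edge $e_1$, and since all one particle reducibility edges of a two point melonic graph carry the colour of its external legs, $e_1$ is not one of them and $\cG=M\setminus e_1$ is connected. It then suffices to show, by induction on the number of vertices, that for any two point melonic graph $N$ with external colour $a$ and any edge $e$ of colour $b\neq a$, the graph $N\setminus e$ is a chain decorated by melonic two point graphs, with one pair of external legs of colour $a$ and one pair of colour $b$. Decompose $N$ along its root cut set into prime melonic pieces $P_1,\dots,P_r$ glued in series along colour-$a$ edges; the edge $e$ lies strictly inside exactly one piece $P_j$, which — in the first definition of melons — is a fundamental melon whose $D$ edges have been blown up into melonic two point graphs $N^c$, one per colour $c\neq a$. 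If $e$ is the single edge of the trivial channel $N^b$, deleting it turns $P_j$ into a $(D-1)$-dipole whose internal edges carry the decorations $N^c$ $(c\neq a,b)$ and whose two colour-$a$ half edges are decorated by the melonic two point graphs formed by $P_1,\dots,P_{j-1}$ and by $P_{j+1},\dots,P_r$ respectively — a decorated chain of length one. Otherwise $e$ lies inside a non-trivial channel $N^c$, and the induction hypothesis applied to $N^c\setminus e$ produces a shorter decorated chain which, spliced into the colour-$c$ slot of $P_j$, gives a longer decorated chain. Thus $\cG$ is a decorated chain, and whether it is broken or unbroken is read off from whether $e_0$ and $e_1$ lie on a common face of colours $\{0,1\}$ in $\bar\cG$ — exactly the datum that distinguishes the two index patterns of the four point function.

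The main obstacle is the induction step just sketched: one must verify that splicing the inductively produced chain into a channel of $P_j$ always yields a genuine chain respecting the colouring (allowing, in particular, the equal--colour chains that arise when the inherited colours coincide), and one must carry the $\{0,1\}$-face bookkeeping carefully enough to separate broken from unbroken. The rest — the power counting, the reduction to a closed melonic graph, and the prime decomposition of $M$ — is immediate from Proposition~\ref{prop:melons} and the two definitions of melonic graphs.
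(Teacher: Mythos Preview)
Your proof is correct and follows essentially the same strategy as the paper: reduce to a two point melonic graph containing a marked edge of a different colour, then peel off a decorated $(D-1)$-dipole via the prime/root-cut-set decomposition of melons and iterate. The paper reaches that two point melonic graph in one step (it closes only the colour~$1$ pair into $e^1$ and works in the resulting $\cG'$), whereas you close both pairs to the melonic $\bar\cG$ and then reopen $e_0$; the intermediate object $M=\bar\cG\setminus e_0$ is the paper's $\cG'$, so from that point on the arguments coincide. Your write-up is somewhat more explicit than the paper's --- you state the induction hypothesis on $N$ and $e$, handle the ``if'' direction separately, and track the broken/unbroken distinction via the $\{0,1\}$-face shared by $e_0$ and $e_1$ --- but the underlying mechanism is identical.

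One small point worth tightening: your induction hypothesis requires the colour $b$ of $e$ to differ from the external colour of $N$, but in the step where $e$ lands in the non-trivial channel $N^b$ itself (so $c=b$) that hypothesis is not directly available. This is harmless --- either $e$ lies in the root cut set of $N^b$, in which case $P_j\setminus e$ is already a single decorated dipole, or $e$ sits inside a prime piece of $N^b$ and hence in a sub-channel of colour $c'\neq b$, where the hypothesis applies --- but it deserves one extra sentence. The paper's ``we conclude by induction'' is equally terse at this spot.
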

\begin{proof}
 A graph $\cG$ with two half edges of color $1$ and two half edges of color $0$ contributes to the leading order four point function only if, by connecting the external half edges 
 of color $1$ into an edge $e^1$ it becomes a two point melonic graph $\cG'$. The edge $e^1$ does not belong to the cut set of the root of $\cG'$ (because $\cG'$ comes from a connected four point function,
 hence can not be disconnected by cutting $e^1$). It follows that $e^1$ will belong to one of the connected components (which are all two point prime melonic graphs), say $\cG''$ obtained by cutting the edges 
 in the cut set of the root of $\cG'$. Now, deleting the end vertices of $\cG''$, one obtains $D$ two point melonic graphs and $e^1$ belongs to one of them, say $\cG'''$. 
 The two end vertices of $\cG''$, the melonic graphs different from $\cG'''$ separated by deleting them and the connected components generated by cutting the edges in the cut set of the root 
 of $\cG$ identify a first $(D-1)$-dipole decorated by melonic two point graphs separating the half edges of color $0$ from the half edges of color $1$ in $\cG$. We conclude by induction.
\end{proof}

With minimal adaptation, a similar statement holds for the broken and unbroken contributions to the four point function with equal external colors. 
The leading order broken and unbroken four point functions are sums over chain (ladder) diagrams with propagators the leading order two point function:
\begin{align*}
& F^{\neq}_{\rm U;LO}(\tau_1,\tau_2,\tau_3,\tau_4) = \sum_{ {\cal U}  \in \genfrac{}{}{0pt}{}{ \text {unbroken chains with}}{\text{different external colors}}} \hat A^{ {\cal U} ;\neq}(\tau_1,\tau_2,\tau_3,\tau_4)  \crcr
& F^{=}_{\rm U;LO}(\tau_1,\tau_2,\tau_3,\tau_4) = \sum_{ {\cal U}  \in \genfrac{}{}{0pt}{}{ \text {unbroken chains with}}{\text{equal external colors}}} \hat A^{ {\cal U} ; = }(\tau_1,\tau_2,\tau_3,\tau_4)  \crcr
& F^{\neq}_{\rm B;LO}(\tau_1,\tau_2,\tau_3,\tau_4) = \sum_{ {\cal U}  \in \genfrac{}{}{0pt}{}{ \text {broken chains with}}{\text{different external colors}}} \hat A^{ {\cal B} ;\neq}(\tau_1,\tau_2,\tau_3,\tau_4)  \crcr
& F^{ = }_{\rm B;LO}(\tau_1,\tau_2,\tau_3,\tau_4) = \sum_{ {\cal U}  \in \genfrac{}{}{0pt}{}{ \text {broken chains with}}{\text{equal external colors}}} \hat A^{ {\cal B} ;=}(\tau_1,\tau_2,\tau_3,\tau_4) \; , \crcr
\end{align*}
 where the hat on $\hat A$ signals that the amplitude is evaluated using as propagator the leading order two point function $G_{\rm LO}(\tau,\tau')$.

\item[Small four point functions.] They arise only for $D\ge 4$ and equal external colors. By contracting the external indices of a four point graph contributing to a small four point function one can not build a melonic
 graph (as only broken and unbroken chains reconstitute melonic graphs), hence these four point functions are power counting suppressed.
\end{description}

The chains decorated by melonic graphs on their edges are identical with the leading order ladder graphs for the four point functions in the standard SYK model \cite{Polchinski:2016xgd,Jevicki:2016bwu,Maldacena:2016hyu}.

\section{The complete $1/n$ series of the two point function}\label{sec:twopointallorders}

The leading order two point function and the leading order broken and unbroken four point functions allow one to 
compute any fixed order in $1/n$ of the two point function. The same holds for arbitrary correlations, but the details of the proof are left to 
the reader.

For any $\omega$, either there is no closed graph with degree $\omega$ or there is an infinite family of closed graphs with degree $\omega$. 
This is due to two facts:
\begin{itemize}
 \item as we have already seen, the degree does not change when inserting a two point melonic graph on an edge.
  \item the degree is insensitive to the length of the chains. Let us call \emph{internal faces} of a chain the faces made by edges of the chain which do \emph{not pass} through the external half edges.
  For example in Fig.~\ref{fig:example2} in the middle, one counts two  internal faces with colors (0,1), one with colors (0,2)  and one with colors (1,2). The other faces, with colors
  (0,3), (1,3) and (2,3) pass through the external half edges and are not internal faces of the chain. For the example in  Fig.~\ref{fig:example2} on the right, a face (0,1) is internal, while the
  second face (0,1) is not.
   
  The number of internal faces and internal vertices of the chain changes with its length. The number of internal faces of a broken chain ${\cal B}$, respectively unbroken chain ${\cal U}$
  is:
  \begin{align*}
      F^{\rm int}({\cal B} ) + 2D = D +  \frac{D(D-1)}{2} k ({\cal B})   \;, \qquad  F^{\rm int}({\cal U} ) + 2D -1 = D +  \frac{D(D-1)}{2} k ( {\cal U} )  \;.
  \end{align*}
because gluing the left (resp. right) half edges together one obtains a closed melonic graph. 
 It follows that the variation of the number of faces of a graph when the length of an internal chain varies
   is always $ \Delta F(\cG) = \Delta F^{\rm int} ({\cal U}) = \Delta F^{\rm int} ({\cal B}) = \frac{D(D-1)}{2} \Delta k(\cG)$ and the degree is unaltered.
   \end{itemize}

The point is that these are the \emph{only reasons} for which the number of graphs with a given degree is infinite \cite{GurSch}.  

\paragraph{The core of a graph.}

Melonic two point subgraphs are either totally disjoint, or their union is a melonic two point subgraph.
A similar property holds for arbitrary two point subgraphs, and the reader just needs to check that if each of the two point subgraphs 
is melonic, then their union is also melonic \cite{GurSch}. This allows one to define \emph{maximal} melonic subgraphs (\emph{i.e.} melonic subgraphs
which are maximal for inclusion). 

For any closed graph $\cG$, the \emph{core} $\hat \cG$ of $\cG$ is obtained by replacing all the maximal melonic subgraphs by edges.
However, the number of cores at fixed  degree is infinite, as a core can have chains of arbitrary length.
By construction a core is melon free, that is it has no melonic subgraph.

\paragraph{The reduced scheme of a core.}

The case $D=3$ is special, and it is the root of much of the technical difficulties in \cite{GurSch}. However, at the end, the classification of graphs at $D=3$ is identical to
the one of graphs with $D\ge 4$. We will therefore review here only the case $D\ge 4$ and the interested reader can check \cite{GurSch} for the details concerning $D=3$.

The main remark is that in $D\ge 4$, $(D-1)$-dipoles are vertex disjoint. Given any chain (even reduced to one dipole), one can check whether the chain is hooked to the 
left or to the right to another dipole and extend the chain maximally. The maximal chains in a melon free graph are vertex disjoint.

We replace all the maximal chains in a core by chain-vertices. The chain-vertices do not track the length of the chain, but only its type (broken or unbroken)
and its external colors (equal or different). The graphs obtained by this procedure are called \emph{reduced schemes}\footnote{Reduced, as chains of chain
vertices and melonic subgraphs are not allowed}. All the cores leading to the same reduced scheme have the same degree, and we call this common degree the degree of the reduced scheme.
The classification of edge colored graphs is now finished due to the following result.

\begin{theorem}\cite{GurSch}\label{thm:main}
The number of reduced schemes with a fixed degree is finite.
\end{theorem}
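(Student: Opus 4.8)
The plan is to bound the total number of vertices of a reduced scheme of degree $\omega$ --- ordinary vertices and chain-vertices together --- by a quantity depending only on $D$ and $\omega$. Granting this, the theorem is immediate: there are only finitely many ways to place the finitely many types of chain-vertices and ordinary $(D+1)$-coloured vertices on a bounded vertex set and to join their half-edges respecting the colouring, and (since all cores leading to a reduced scheme share its degree) every reduced scheme of degree $\omega$ is one of these finitely many graphs.

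To obtain such a bound I would first pass to a face-counting estimate. For a reduced scheme, or for the melon-free core $\cG_0$ obtained from it by expanding every chain-vertex into a single $(D-1)$-dipole, the Euler-type identities \eqref{eq:facesdeg}--\eqref{eq:truc} relate the number of vertices $2k$, the numbers $F_s$ of faces with $2s$ vertices, and the degree $\omega$. For $D\ge 4$ every coefficient $s(D-1)-D-1$ with $s\ge 2$ occurring in \eqref{eq:truc} is at least $D-3\ge 1$, so, as noted right after Eq.~\eqref{eq:truc}, the number of faces with more than two vertices and the maximal length of a face are bounded linearly in $\omega$ and in the number $F_1$ of short faces; combining this with $\sum_s 2s\,F_s = D(D+1)k$ yields
\begin{equation*}
 k \ \le\ C_1(D)\,\big(\text{number of short faces}\big)\ +\ C_2(D)\,\omega
\end{equation*}
for explicit constants $C_1(D),C_2(D)$. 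Thus everything reduces to controlling, linearly in $\omega$, the number of short faces (equivalently, of pairs of vertices joined by two parallel edges of the same colours) that survive in a reduced scheme, together with the number of chain-vertices, each of which carries a fixed number of short faces of its own.

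This last point is the crux, and it is where melon-freeness and chain-reducedness must be used. A surviving short face is a bigon on two vertices $v,w$; since a reduced scheme has no melonic subgraph and no chain of chain-vertices, $v$ and $w$ cannot be joined by $D-1$ or more edges, so a bigon is a ``partial dipole'' not contained in any chain-vertex. The structural fact to establish is that the only mechanisms producing families of edge-coloured graphs with unboundedly many bigons at fixed degree are melonic insertions and the elongation of chains --- precisely the two reductions that have been quotiented away in passing to the reduced scheme --- so that each surviving bigon, and each chain-vertex, is rigid. Concretely I would argue by a deletion/contraction induction: to a bigon (or to an isolated $(D-1)$-dipole) one associates a generalized dipole-contraction move that preserves the degree and either strictly decreases the number of vertices, or exhibits a forbidden melonic subgraph, or produces a chain of length $>1$ between two chain-vertices, or would force the degree to be negative; iterating bounds the number of bigons and of chain-vertices in terms of $D$ and $\omega$. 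Carrying out this analysis while checking at every step that the colouring, the connectivity and the reduced-scheme normal form are preserved is the main obstacle, and it is the bulk of the work in \cite{GurSch}. The case $D=3$ needs extra care because the coefficient $s(D-1)-D-1$ vanishes at $s=2$, so that faces with four vertices are not controlled by \eqref{eq:truc} alone; a separate analysis of these faces (also carried out in \cite{GurSch}) shows that the same conclusion holds. Once the numbers of short faces and of chain-vertices are bounded, the displayed inequality bounds $k$, hence the number of ordinary vertices and of chain-vertices, and the theorem follows.
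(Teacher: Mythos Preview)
Your proposal gathers the right ingredients --- the face identity \eqref{eq:truc}, melon-freeness, the special role of short faces, and the $D=3$ caveat --- and your reduction of the problem to bounding $F_1$ (and the number of chain-vertices) by $\omega$ is exactly where the work lies. Where you diverge from the paper is in the \emph{architecture} of that reduction.

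The paper cleanly separates the argument into two independent steps. \textbf{First}, it bounds the number of chain-vertices on their own, by iteratively \emph{deleting} chain-vertices (reconnecting the four half-edges in the unique colour-compatible way) and tracking the degree: if the deletion is non-separating the degree strictly drops, while if it is separating the degree is split between the two pieces, each of which is still melon-free and hence of strictly positive degree. This dichotomy gives a linear bound on the number of chain-vertices in terms of $\omega$ with no reference to $F_1$ at all. You do not isolate this argument; you fold the chain-vertex bound into the same ``deletion/contraction induction'' you propose for bigons, and the separating/non-separating mechanism never appears.

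\textbf{Second}, only after the chain-vertex count is bounded does the paper expand each chain-vertex to its minimal realisation and work with a melon-free ordinary graph carrying a bounded number of $(D-1)$-dipoles; bounding $F_1$ in that setting is done by a specific ``squeezing'' procedure (contracting as many short faces as possible without creating new $(D-1)$-dipoles), after which \eqref{eq:truc} closes the estimate on $k$. Your generic contraction move on bigons --- ``preserves the degree and either strictly decreases the number of vertices, or \ldots'' --- is the right instinct but is not sharp enough as stated: a bigon contraction does not automatically preserve the degree, and the list of obstructions you give does not by itself yield a bound on the \emph{original} number of bigons (it bounds the length of the iteration, which is not the same thing unless you control what the terminal object looks like). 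The paper's two-step ordering is what makes each piece tractable: bounding chain-vertices first turns the second step into a question about ordinary coloured graphs with a controlled dipole budget, which is where the squeezing argument bites.

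In short: your outline is not wrong, but the paper's decomposition (chain-vertices first via the separating/non-separating trick, then $F_1$ via squeezing on a graph with bounded dipoles) is what actually makes the bounds go through, and your single combined induction would need to rediscover both of those mechanisms to succeed.
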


The proof of this theorem is a quite technical. It is divided into two parts:
\begin{description}
 \item[Part 1.] {\it The number of chain-vertices in a reduced scheme of degree $\omega$ is bounded.} To prove this statement we iteratively delete the chain-vertices in a scheme (and reconnect the half edges on the left 
 and on the right of the chain-vertex in the unique manner that respects the colorings). There are two cases. First, if the deletion does not separate the graph into connected components
 one can show \cite{GurSch} that the degree strictly decreases. Second, if it separates it, one can show \cite{GurSch} that the degree is distributed between the connected components. However, as a reduced scheme is melon free, 
 both components must have a strictly positive degree. Consequently the number of chain-vertices is bounded linearly in $\omega$.
 
 \item[Part 2.] {\it The number of reduced schemes with degree $\omega$ and bounded number of chain-vertices is finite.} To prove this statement one notices that every chain-vertex admits a minimal realization
 as a shortest chain of $(D-1)$-dipoles. We then map the reduced schemes onto melon free graphs with bounded number of $(D-1)$-dipoles. 
 One can then prove \cite{GurSch} that the number of such graphs is finite by proving they have a finite number of vertices. 
 One starts by bounding the number of faces with two vertices linearly in the number of dipoles and the number of vertices\footnote{In fact one needs to include in the analysis also lower order dipoles, 
 see \cite{GurSch} for details.} by squeezing the maximal number of faces with two vertices without creating further $(D-1)$-dipoles. Together with Eq.~\eqref{eq:truc}, after some work,
 one obtains a linear bound on $k(\cG)$ in terms of the degree $\omega$ and the number of $(D-1)$-dipoles. 
\end{description}

\subsection{The two point function of the SYK--like tensor model at any order in $1/n$}
 
 The classification of edge colored graphs is used in the SYK--like model proposed by Witten as follows. One computes the leading order two point function:
  \[G_{\rm LO}(\tau,\tau') = \sum_{ \cM \in \genfrac{}{}{0pt}{}{\text{two point melonic graphs}}{\text{with external color $0$} } } A^{\cM}(\tau,\tau') \; ,\]
and the leading order broken and unbroken four point functions:
\begin{align*}
& F^{\neq}_{\rm B;LO}(\tau_1,\tau_2,\tau_3,\tau_4) = \sum_{ {\cal U}  \in \genfrac{}{}{0pt}{}{ \text {broken chains with}}{\text{different external colors}}} \hat A^{ {\cal B} ;\neq}(\tau_1,\tau_2,\tau_3,\tau_4)  \;,  \crcr
& F^{ = }_{\rm B;LO}(\tau_1,\tau_2,\tau_3,\tau_4) = \sum_{ {\cal U}  \in \genfrac{}{}{0pt}{}{ \text {broken chains with}}{\text{equal external colors}}} \hat A^{ {\cal B} ;=}(\tau_1,\tau_2,\tau_3,\tau_4) \; , \crcr
& F^{\neq}_{\rm U;LO}(\tau_1,\tau_2,\tau_3,\tau_4) = \sum_{ {\cal U}  \in \genfrac{}{}{0pt}{}{ \text {unbroken chains with}}{\text{different external colors}}} \hat A^{ {\cal U} ;\neq}(\tau_1,\tau_2,\tau_3,\tau_4)  \;, \crcr
& F^{=}_{\rm U;LO}(\tau_1,\tau_2,\tau_3,\tau_4) = \sum_{ {\cal U}  \in \genfrac{}{}{0pt}{}{ \text {unbroken chains with}}{\text{equal external colors}}} \hat A^{ {\cal U} ; = }(\tau_1,\tau_2,\tau_3,\tau_4)  \;,
\end{align*}
(recall that the hat on $\hat A$ signals that the amplitude is evaluated using as propagator the leading order two point function $G_{\rm LO}(\tau,\tau')$).
Of course, in practice it is easier to use self consistency equations implied by the particular combinatorial structure of the melonic graphs and the chains \cite{Polchinski:2016xgd,Jevicki:2016bwu,Maldacena:2016hyu}.

From the connected four point functions one obtains the amputated four point functions:
\[
 \tilde F^{ = (\neq) }_{\rm U(B), LO} (\tau_1,\tau_2,\tau_3,\tau_4) = 
 \int \left( \prod_{i=1}^4 d\tau_i' \right)  \left( \prod_{i=1}^4  (G_{\rm LO} )^{-1} (\tau_i,\tau_i') \right)  F^{ = (\neq) }_{\rm U(B), LO} (\tau_1',\tau_2',\tau_3',\tau_4') \;,
\]
where $ (G_{\rm LO} )^{-1} $ is the operator inverse of the leading order two point function.

Then, Theorem \ref{thm:main} translates as:
 \begin{theorem}\label{thm:mainmic}
  Any order in the $1/n$, the normalized two point function $G(\tau,\tau')$
  in the SYK--like model of \cite{Witten:2016iux} is a {\bf finite} sum of convolutions 
  of a {\bf finite number} of leading order two point functions 
  $ G_{\rm LO}(\tau,\tau')$ and leading order broken and unbroken amputated four point functions:
  \[  \tilde F^{\neq}_{\rm U;LO}(\tau_1,\tau_2,\tau_3,\tau_4) \;, \;\;  \tilde F^{=}_{\rm U;LO}(\tau_1,\tau_2,\tau_3,\tau_4) 
  \;, \;\;  \tilde F^{\neq}_{\rm B;LO}(\tau_1,\tau_2,\tau_3,\tau_4) \;,\;\; \tilde F^{ = }_{\rm B;LO}(\tau_1,\tau_2,\tau_3,\tau_4) \;.\]  
 \end{theorem}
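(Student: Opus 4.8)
The plan is to fix the order in the $1/n$ expansion, regroup the (infinitely many) closed graphs of that order according to the finitely many reduced schemes of the corresponding degree provided by Theorem~\ref{thm:main}, and resum each group into a convolution of leading order kernels. By Eq.~\eqref{eq:ampli}, the coefficient of $n^{-\frac{2}{(D-1)!}\omega}$ in $G(\tau,\tau')$ is
\begin{equation*}
 G_{\omega}(\tau,\tau')\;=\;\sum_{\substack{\cG\ \text{closed}\\ \omega(\cG)=\omega}} A^{\cG}(\tau,\tau')\;,
\end{equation*}
so it suffices to show that, for every fixed $\omega$, $G_{\omega}$ is a finite sum of convolutions of finitely many copies of $G_{\rm LO}$ and of the four amputated kernels $\tilde F^{=(\neq)}_{\rm U(B);LO}$ (any leftover bare vertex kernels in a reduced scheme carry no power of $n$ and are part of the convolution data).

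First I would invoke the reduction already set up in Section~\ref{sec:twopointallorders}. Since the degree is unchanged by the insertion of a melonic two point graph on an edge and by varying the length of an internal chain, every closed graph $\cG$ with $\omega(\cG)=\omega$ has a well defined core $\hat\cG$ (contract all maximal melonic two point subgraphs to edges) and a well defined reduced scheme $\cS(\cG)$ (further contract all maximal chains to chain-vertices), and $\cS(\cG)$ has degree $\omega$. By Theorem~\ref{thm:main} there are only finitely many reduced schemes $\cS_1,\dots,\cS_M$ of degree $\omega$, hence
\begin{equation*}
 G_{\omega}(\tau,\tau')\;=\;\sum_{j=1}^{M}\ \sum_{\substack{\cG\ \text{closed},\ \omega(\cG)=\omega\\ \cS(\cG)=\cS_j}} A^{\cG}(\tau,\tau')\;,
\end{equation*}
and it is enough to resum the inner sum for one reduced scheme $\cS=\cS_j$ at a time.

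Next I would describe the fibre of $\cG\mapsto\cS(\cG)$ and factorize the amplitude. Undoing the reduction, a closed $\cG$ with $\cS(\cG)=\cS$ is obtained from $\cS$ by (i) replacing each chain-vertex by a chain of the prescribed type — broken or unbroken, equal or different external colors — of arbitrary length, carrying an arbitrary melonic two point graph on each of its edges, and (ii) replacing each remaining edge of $\cS$ by an arbitrary melonic two point graph; by the structure theorems of \cite{GurSch} these choices are independent and produce each such $\cG$ exactly once. Because, by Eq.~\eqref{eq:ampli}, $A^{\cG}$ is an integral over vertex positions with a bare propagator on every edge and a fixed vertex kernel at every vertex, gluing subgraphs along edges is convolution, so the inner sum factorizes over the constituents of $\cS$. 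Resumming, on a fixed edge of $\cS$, over all melonic two point graphs inserted there turns its bare propagator into $G_{\rm LO}$ (by Proposition~\ref{prop:melons} and the definition of $G_{\rm LO}$). Resumming, at a fixed chain-vertex, over the chain length and over the melonic decorations of the chain edges produces the leading order connected four point function $F^{=(\neq)}_{\rm U(B);LO}$ of the matching type (Proposition~\ref{prop:chains}); since this $F$ already carries a $G_{\rm LO}$ on each of its four external legs, which would double-count against the $G_{\rm LO}$'s assigned to the four edges of $\cS$ incident to the chain-vertex, I replace it by the amputated kernel $\tilde F^{=(\neq)}_{\rm U(B);LO}$ and keep one $G_{\rm LO}$ per incident edge of $\cS$ — this is precisely the amputation in the definition of $\tilde F$. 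As $\cS$ has finitely many edges, chain-vertices and ordinary vertices, the inner sum is a single finite convolution of finitely many $G_{\rm LO}$'s, finitely many amputated four point kernels, and finitely many ($n$-independent) bare vertex kernels, with the root of $\cS$ carrying the arguments $\tau,\tau'$. Summing over $j=1,\dots,M$ gives the theorem.

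The main obstacle is the exactness of the fibre description in the last step: that $\cG\mapsto\cS(\cG)$ lands in the finite list of Theorem~\ref{thm:main} and that its fibres are faithfully parametrized by independent choices of chain lengths and melonic decorations, with no closed graph counted twice or omitted. This is the technical heart of \cite{GurSch} — it rests on the ``nested or disjoint'' dichotomy for two point subgraphs, on the vertex-disjointness of maximal chains for $D\ge 4$, and on the separate, more delicate treatment of $D=3$ — and in this paper it is simply imported. The only genuinely new bookkeeping is the amputation at the chain/scheme interfaces, needed precisely so that each bare propagator of $\cG$ is dressed to a single $G_{\rm LO}$ and the connected four point kernels are traded for their amputated counterparts; this is why $\tilde F^{=(\neq)}_{\rm U(B);LO}$, rather than $F^{=(\neq)}_{\rm U(B);LO}$, appear in the statement.
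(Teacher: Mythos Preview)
Your proposal is correct and follows exactly the route the paper intends: group closed graphs of fixed degree by their reduced scheme, invoke Theorem~\ref{thm:main} to get finitely many schemes, and resum the melonic decorations and chain lengths into $G_{\rm LO}$ and the (amputated) leading order four point kernels. In fact you have spelled out in detail what the paper compresses into the single phrase ``Theorem~\ref{thm:main} translates as'', including the amputation bookkeeping at the chain/scheme interfaces that justifies the appearance of $\tilde F$ rather than $F$.
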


 The crucial point in this theorem is that the number of patterns of convolution at any fixed order in $1/n$  (\emph{i.e.} reduced schemes) is finite. 
 With some effort, one can list all the convolutions allowed at a given order in $1/n$. For instance the first non trivial order 
 convolutions are the lollipop reduced schemes of \cite{GurSch}.
  
The full $1/n$ expansion of arbitrary correlations can be obtained by introducing reduced schemes with several marked edges. The detailed analysis of such reduced schemes
is expected to follow closely the one for reduced schemes with one root edge performed in \cite{GurSch}. While this remains to be done,  
the equivalent of Theorem \ref{thm:mainmic} is expected to hold for arbitrary $2p$ point correlation functions.  

Before concluding, let us comment on the case of a real fermionic field. Much of the combinatorics goes through for non bipartite colored graphs. Of course, the list of allowed convolutions 
(reduced schemes) at the end will change, but this is not very consequential. However, checking rigorously that the number of reduced schemes of fixed degree for non bipartite graphs  
is finite might require some effort. 
 
\section*{Acknowledgements}

The author would like to thank Edward Witten for comments and suggestions.
 
\bibliography{/home/razvan/Desktop/lucru/Ongoing/Refs/Refs.bib}

\end{document}